\newtheorem{Thm}{Theorem}
\def\R{{\mathbb R}}
\def\W{{\mathbb W}}
\DeclareMathOperator{\PR}{\mathbb{P}}
\DeclareMathOperator{\E}{\mathbb{E}}
\DeclareMathOperator{\RE}{\mathbb{R}}
\newcommand{\StatexIndent}[1][3]{%
	\setlength\@tempdima{\algorithmicindent}%
	\Statex\hskip\dimexpr#1\@tempdima\relax}
\begin{document}

\title{Bayesian Optimization for CVaR-based portfolio optimization}

\author{Robert Millar}
\author{Jinglai Li}
\email{rjm520@bham.ac.uk}
\email{j.li.10@bham.ac.uk}
\affiliation{%
  \institution{University of Birmingham}
  \city{Birmingham}
  \country{UK}
}

\renewcommand{\shortauthors}{Millar et al.}

\begin{abstract}
  Optimal portfolio allocation is often formulated as a constrained risk problem, where one aims to minimize a risk measure subject to some performance constraints. This paper presents new Bayesian Optimization algorithms for such constrained minimization problems, seeking to minimize the conditional value-at-risk (a computationally intensive risk measure) under a minimum expected return constraint. The proposed algorithms utilize a new acquisition function, which drives sampling towards the optimal region. Additionally, a new two-stage procedure is developed, which significantly reduces the number of evaluations of the expensive-to-evaluate objective function. The proposed algorithm's competitive performance is demonstrated through practical examples.
\end{abstract}

\keywords{Optimal portfolio allocation, Bayesian optimization}

\maketitle

\section{Introduction}
Portfolio optimization is the process of determining the optimal allocation of resources across various assets. %
A common goal is to minimize a risk measure, such as value-at-risk (VaR) or conditional value-at-risk (CVaR), while meeting a minimum expected return constraint. Traditional methods such as Linear Programming work effectively when objective and constraint functions are linear and accessible~\cite{rockafellar2000optimization,krokhmal2002portfolio}. For non-linear but accessible functions, alternate gradient descent methods have been explored (see e.g.  \cite{gaivoronski2005value,ghaoui2003worst}).

However, many practical settings involve non-linear, noisy, and expensive-to-evaluate objective and constraint functions, making traditional approaches infeasible. Bayesian Optimization (BO)~\cite{movckus1975bayesian}  has gained attention for its ability to address such challenges. BO leverages the entire history of samples to construct a posterior distribution over the unknown objective and constraint functions, employing an acquisition function to balance exploration and exploitation in selecting subsequent sampling points. 

Considerable research has focused on developing BO methods for risk-based portfolio allocation problems. Cakmak \cite{cakmak2020bayesian} proposed a BO algorithm for the unconstrained optimization of VaR and CVaR, modelling the underlying return function as a Gaussian Process and then applying a knowledge gradient-based acquisition function.

Nguyen \textit{et. al} further advanced BO methods for optimizing VaR \cite{nguyen2021value} and CVaR \cite{nguyen2021optimizing}, offering computational efficiency and theoretical robustness. More recent works include  \cite{daulton2022robust}, addressing multivariate VaR problems, and  \cite{picheny2022bayesian}, which examines scenarios where the distribution of environmental variables is unknown.
For general constrained optimization, several BO algorithms have been developed to incorporate constraints into the acquisition function design. Notable works include 
 \cite{gardner2014bayesian} and  \cite{gelbart2014bayesian}. Recent advancements  \cite{lam2017lookahead,letham2019constrained, eriksson2021scalable} have improved the efficiency of these methods.

Despite these advancements, no BO algorithm has been specifically designed for constrained portfolio optimization problems, where the constraints are integral to the problem structure and significantly impact the solution space. This work aims to fill this gap by introducing new BO methods tailored for constrained portfolio allocation. A popular class of BO methods, for general constrained optimization problems, incorporate the constraints into the acquisition function design \cite{gramacy2011optimization, gardner2014bayesian, gelbart2014bayesian}.
More recent advances include \cite{lam2017lookahead, letham2019constrained, eriksson2021scalable}, among others. 
Whilst these methods are effective, they  require frequent evaluation of the risk measure functions, which is unsuitable for complex allocation problems.

Building on \cite{gardner2014bayesian} and \cite{gelbart2014bayesian}, we aim to take advantage of two key properties which hold in portfolio allocation problems: 
1) the expected return constraint functions are much cheaper to evaluate than the objective function, i.e., the risk measures; 2) the expected return constraints are typically \emph{active} -- namely, the optimal solution lies on the boundary of the feasible region defined by the constraints.

Firstly, this paper introduces a two-stage BO adaptation, that significantly reduces computational cost by limiting full-function evaluations to samples meeting specific criteria. This differs from cascade-based BO (e.g. \citep{kusakawa2022bayesian}) where all samples in the first stage are used in the second, regardless of their feasibility or promise. Secondly, this work proposes a new acquisition function that encourages more sampling in the near-optimal region, improving the algorithm's performance. The methods are also adaptable for batch implementation to leverage parallel computing.

Numerical examples show that the proposed BO algorithms effectively solve constrained portfolio allocation problems, outperforming existing methods with a lower computational cost and faster convergence. These improvements result from combining the new acquisition function, the two-stage procedure, and parallel batch implementation.

\section{Optimal portfolio allocation}\label{sec:OPA} 
Consider an investor seeking to allocate capital across $N$ assets. Let $\textbf{x} = (x_1,...,x_N)$ be an $N$-dimensional vector representing the capital allocation or \emph{portfolio weights}, where each $x_i$ is the fraction of total capital allocated to the $i$th asset. The vector $\textbf{x}$ must satisfy the constraints $\W = \{\textbf{x} \in \R^N \mid x_i \geq 0, \sum_{i=1}^{N} x_i \leq 1\}$, ensuring that the sum of all weights does not exceed the total available capital, normalized to 1.

We account for the uncertainty in the future asset returns by introducing a random variable \textbf{Z} which follows a probability distribution $p_{\textbf{Z}}(\cdot)$. 
The return function $f(\textbf{x}, \textbf{z})$ represents the forecasted portfolio return for an allocation $\textbf{x}$ and realization $\textbf{z}$ from $\textbf{Z}$. For clarity:
\begin{itemize}
    \item $f(\textbf{0}, \textbf{z}) = 0$ indicates no capital invested means no returns.
    \item $f(\textbf{x}, \textbf{z}) < 0$ indicates a forecasted loss.
    \item $f(\textbf{x}, \textbf{z}) > 0$ indicates a forecasted gain.
\end{itemize}
For example, $f(\textbf{x}, \textbf{z}) = 0.1$ is a forecasted gain of 10\%, while $f(\textbf{x}, \textbf{z}) = -0.2$ is a forecasted loss of 20\%.

\subsection{Risk Measures}
We discuss two popular risk measures here.
Value-at-Risk (VaR) is defined as the threshold value $\omega$ such that the probability of a loss exceeding $\omega$ is at most $(1 - \alpha)$. Formally, for a return function $f$, portfolio weights $\textbf{x}$, and VaR threshold $\alpha$, VaR is defined as:\\
\centerline{
$\text{VaR}_{\alpha}[f(\textbf{x}, \textbf{Z})] = \inf \{ \omega : \mathbb{P}(f(\textbf{x}, \textbf{Z}) \leq - \omega) \leq 1 - \alpha \}$.}\\
We denote $\text{VaR}_{\alpha}[f(\textbf{x}, \textbf{Z})]$ as $v_{f}(\textbf{x};\alpha)$ for conciseness. 

Conditional Value-at-Risk (CVaR) at a specified risk level $\alpha \in (0,1)$ is the expected loss given that the losses exceed the VaR threshold. Formally, CVaR is defined as~\cite{nguyen2021optimizing},:
\\
\centerline{$\text{CVaR}_{\alpha}[f(\textbf{x}, \textbf{Z})] = -\E[f(\textbf{x}, \textbf{Z}) \mid f(\textbf{x}, \textbf{Z}) \leq -v_{f}(\textbf{x};\alpha)]$.}
\\
Before proceeding, we clarify the notation: $f(\textbf{x}, \textbf{z}) \leq 0$ indicates losses, whereas VaR and CVaR pertain to losses, so $v_{f}(\textbf{x};\alpha) \geq 0$ and $\text{CVaR}_{\alpha}[f(\textbf{x}, \textbf{Z})] \geq 0$ represent negative returns, or losses.

CVaR meets many of the desirable properties for risk measures established in \cite{artzner1999coherent}, including subadditivity, translation invariance, positive homogeneity, and monotonicity. These properties make CVaR more suitable than VaR for portfolio optimization, as VaR often exhibits multiple local extrema and unpredictable behaviour as a function of portfolio positions \cite{mausser1999beyond}. 
CVaR is usually computed with the Monte Carlo (MC) simulation, which is detailed in the Supplementary Information (SI).

\subsection{Problem Set-up}\label{Sect:PortOpt_SetUp}
The expected return for a portfolio with weights $\textbf{x}$ is defined as the expectation over all possible returns, $\E_\textbf{Z}[f(\textbf{x},\textbf{Z})]$. Research has shown a positive relationship between CVaR and expected return \cite{guo2019mean}. Increasing risk exposure generally leads to higher expected returns, and vice versa.

CVaR is monotonic to stochastic dominance of orders 1 and 2 \cite{pflug2000some} implying that if one investment option has a lower CVaR and provides equal or higher expected returns, it is universally more favourable. This property is crucial for identifying optimal portfolios that meet specific return requirements with minimal risk.

For a given expected return requirement, an optimal portfolio provides the desired return with the lowest possible CVaR. For ease of notation, we define the objective function as $g(\-x)$ and the constraint function as $R(\mathbf{x})$. The constrained portfolio optimization problem can be formulated as:
\begin{subequations}\label{e:opa}
    \begin{align}
    \underset{\textbf{x}}{\text{min }} g(\textbf{x}) & := \text{ CVaR}_{\alpha}[f(\textbf{x},\textbf{Z})] \\
    \text{s.t.} \quad R(\textbf{x}) & :=\E_\textbf{Z}[f(\textbf{x},\textbf{Z})]  \geq r^{\text{min}} \label{e:risk} \\
    \quad 0 \leq x_i \leq & 1, \, i =1,...,N, \quad \sum_{i=1}^{N} x_i \leq 1.
    \end{align}
\end{subequations}
As shown in SI, the expected return can also be computed with the MC simulation, and moreover, it is possible to obtain an accurate estimate of the expected return with a relatively low sample size, 
while a large number of samples is required to obtain an accurate estimate of the CVaR. 
\emph{As such the computational cost of calculating CVaR, i.e., the objective function, is significantly higher than the expected return constraint.}
In the numerical examples provided in Section~\ref{sec:portfolio}, the cost for evaluating the expected return is around $1\%$ of that for evaluating CVaR.
This fact is essential for our proposed BO algorithm. 

\section{Bayesian Optimization}\label{Section:BO-Original}
Bayesian Optimization \cite{movckus1975bayesian} is a powerful method for solving global optimization problems. In this section, we present an adaptation to the BO methods developed in \cite{gardner2014bayesian, gelbart2014bayesian}, so that it can handle the uncertainty caused by an environmental random variable $\textbf{Z}$.

\subsection{Gaussian Process}
A Gaussian Process (GP) is a collection of random variables, any finite number of which have a joint Gaussian distribution. The GP model provides a framework for conducting non-parametric regression in the Bayesian fashion. 
As defined in \cite{rasmussen2003gaussian}, the GP model for a function $g(\-x)$ can be written as: 
\\
\centerline{$g(\-x) \sim GP(\mu(\-x),k(\-x,\-x'))$,}
\\
where we define the mean function $\mu(\-x)$ and covariance function $k(x,x')$ for any pair of input points $\-x,\-x' \in \RE^{d}$:
\begin{equation}
	\begin{split}
		\mu(\-x) &= \E[g(\-x)],\\
		k(\-x,\-x') &= \E[(g(\-x) - \mu(\-x))(g(\-x') - \mu(\-x'))].
	\end{split}
\end{equation}
The GP-based regression proceeds as follows. 
Given a set of input points $X = \{x_1,...,x_T\}$, corresponding function values $g(\-x) = \{g(x_1),...,g(x_T)\}$,
referred to as the training set and some new design point $\hat{x}$ which we are interested in evaluating, the joint Gaussianity of all finite subsets implies:
\begin{gather}
	\begin{bmatrix} g(X) \\ g(\hat{x}) \end{bmatrix}
	= N \left(
	\begin{bmatrix}
		\mu(X) \\ \mu(\hat{x})
	\end{bmatrix},
	\begin{bmatrix}
		k(X,X) & k(X,\hat{x}) \\
		k(\hat{x},X) & k(\hat{x},\hat{x})
	\end{bmatrix}
	\right)
\end{gather}
From this, we can calculate the posterior distribution of $g(\hat{x})$  conditional on the training data set $D=\{X,\,g(X)\}$, which is also a Gaussian distribution: 
$N(\tilde{\mu}(\hat{x}), \tilde{\Sigma}(\hat{x}))$, with 
\begin{equation}
	\begin{split}
		\tilde{\mu}(\hat{x}) &= \mu(\hat{x}) + k(\hat{x},X)k(X,X)^{-1}(g(X)-\mu(X)),\\
		\tilde{\Sigma}(\hat{x}) &= k(\hat{x},\hat{x}) - k(\hat{x},X)k(X,X)^{-1}k(X,\hat{x}). 
	\end{split}
\end{equation}
Several technical issues of the GP model, such as kernel function choice and hyperparameter tuning, are not discussed here. We refer the reader to~\cite{rasmussen2003gaussian} for further details.

\subsection{Unconstrained Bayesian Optimization}
BO uses a probabilistic framework for the optimization of black-box functions, based on the GP model. In the unconstrained setting, BO sequentially evaluates the objective function at selected points, from which a GP model of the target function is constructed. The design point(s) are selected by maximizing an acquisition function, which quantifies a desired trade-off between the exploration and exploitation of the GP model. Commonly used acquisition functions include expected improvement, probability of improvement and upper confidence bounds.
The standard BO procedure for unconstrained problems is given in Alg.~\ref{alg:bo}.
\begin{algorithm}
\caption{Bayesian Optimization}\label{alg:bo}
\begin{algorithmic}
\Require objective function $g(\textbf{x})$, acquisition function $a(\textbf{x},\tilde{g})$
\Ensure a global minimizer of $g(\textbf{x})$
\State initialize the training data set $D_0$ using an initial design 
\State let $t=0$;
\While{stopping criteria not met}
   \State let $t=t+1$;
      \State construct a GP model $\tilde{g}_{t-1}$ using $D_{t-1}$;
   \State let $\textbf{x}_t = \arg\max_{\textbf{x}} a(\textbf{x},\tilde{g}_{t-1})$;
\State let $D_{t} = D_{t-1}\cup\{\textbf{x}_{t},g(\textbf{x}_t)\}$;
\EndWhile
\end{algorithmic}
\end{algorithm}

\subsection{Bayesian Optimization with Constraints}\label{sec:cbo}
In this section, we present the BO method for optimization problems with inequality constraints, largely following \cite{gardner2014bayesian} and \cite{gelbart2014bayesian}.
Suppose that we have the following constrained optimization problem:
\begin{equation}\label{e:cop}
\min_{x} ~ g(\-x) ~ \text{s.t.} ~ c_k(\-x) \leq 0, k = 1,...,K.
\end{equation}
To solve Eq.\eqref{e:cop} with the BO method, we need to model all the constraint functions $c_k(\-x)$ as GPs.
Namely, the GP model for the $k$-th constraint $c_k(\-x)$ is obtained from the constraint training set $C^k = \{(\-x_1,c_{k}(\-x_1)),...,$ $(\-x_m,c_{k}(\-x_m))\}$, where the constraint functions are evaluated at each design point. Therefore, when selecting the design points, both the objective and constraints need to be considered, which is accomplished by incorporating the constraints into the acquisition function.

The authors of \cite{gardner2014bayesian} propose modifying the \emph{Expected Improvement} (EI) acquisition function. Let ${\-x}^{+}$ be the current best-evaluated point, that is, $g(\-x^+)$ is the smallest in the current training set. We define the improvement as
\begin{equation}
I(\-x) = \max\{0,g(\-x^+)-\tilde{g}(\-x)\}
\end{equation}
where $\tilde{g}(\-x)$ is the GP model constructed with the current objective training set $D$. The EI acquisition function is defined as 
\\[0.5ex]
\centerline{
${EI}(\-x)=\mathbb{E}[I(\-x)|D]$,}
where the expectation is taken over the posterior of $\tilde{g}(\-x)$.
We further adapt this acquisition function to account for the constraints.
Let $\tilde{c}_k(\-x)$ be the GP model for the constraint function $c_k(\-x)$, conditional on the training set $C^k$, for $k=1,..., K$ and let
\\
  \centerline{  $\text{PF}(\-x) = \PR(\tilde{c}_1(\-x) \leq 0,\,\tilde{c}_2(\-x) \leq 0,\,...,\,\tilde{c}_K(\-x) \leq 0)$,}
\\
which is the probability that a candidate point $x$ satisfies all the constraints. 
In our present problem, we only need to consider the case where
the constraints are conditionally independent given $x$, as such, we have:
\begin{equation}
    \text{PF}(\-x) =\prod_{k=1}^K \PR(\tilde{c}_k(\-x) \leq 0).
\end{equation}
Finally, we define the new acquisition function to be
\begin{equation}
a_{\text{CW-EI}}(\textbf{x}) = \text{EI}(\-x)\text{PF}(\-x),\label{e:caf}
\end{equation}
which is referred to as the constraint-weighted expected improvement (CW-EI) acquisition function in \cite{gardner2014bayesian}. 
The constrained BO algorithm proceeds largely the same as the unconstrained version (Alg.~\ref{alg:bo}), except the following two main differences:
(1) the constrained acquisition function in Eq.~\eqref{e:caf} is used to select the new design points;
(2) for each design point, both the objective and constraint functions are evaluated.
We hereafter refer to this constrained BO method as CW-EI BO. 

Finally we note that in a class of BO approaches \cite{frohlich2020noisy, cakmak2020bayesian, daulton2022multi}, the underlying function $f$ is modelled as a single GP for a fixed environmental variable $\textbf{Z}$ during the optimization procedure and then $\textbf{Z}$ is only random at implementation time. 
While suitable for many unconstrained problems, this framework is inadequate for portfolio allocation problems, where $g(\textbf{x})$ and $R(\textbf{x})$ must be handled separately.

\section{Proposed methodology}\label{Sect:Proposal}
Our proposed method
builds upon CW-EI  BO presented in Section \ref{sec:cbo}.
When applied to the portfolio allocation problem, 
CW-EI BO models both the CVaR objective function and expected return constraint as separate GPs. In this approach, for each proposed weight, as determined by the acquisition function, a full evaluation of the objective and constraint functions must be performed, in order to update their respective GPs. 
Where again, we emphasise that the computational cost of calculating CVaR is significantly higher than the expected return. 
Therefore, the computational efficiency can be enhanced by reducing the number of 
CVaR evaluations.


\subsection{Activeness of the Constraint}
This section formalizes several assumptions related to the portfolio optimization problem and introduces a theorem, which enables the development of a new BO algorithm that reduces the number of CVaR evaluations.

\textbf{Assumptions}
\begin{enumerate}
    \item $f(\textbf{x},\textbf{z})$ is a continuous function of $\textbf{x}$ for any fixed $\textbf{z}$.
    \item $f(\textbf{0},\textbf{z}) \equiv 0$.
    \item For a given $\textbf{x} \in \mathbb{W}$ and any fixed $\textbf{z}$, if $f(\textbf{x},\textbf{z})\leq 0$, $f(\rho\textbf{x},\textbf{z})$ is a decreasing function of $\rho\in[0,1]$.
    \item There exists $\alpha\in(0,1)$ such that $v_f(\textbf{x};\alpha) \geq 0$ for all $\textbf{x} \in \mathbb{W}$.
\end{enumerate}

\begin{itemize}
    \item Assumption 1 ensures that small portfolio allocation changes do not lead to abrupt or unpredictable changes in outcomes, which is reasonable in most financial models.
    \item Assumption 2 is straightforward; an absence of investment results in a neutral (zero) financial return.
    \item Assumption 3 implies that if a chosen portfolio allocation results in a loss for a certain scenario, this loss does not increase if the total capital is proportionally reduced\footnote{For clarity, as $\rho$ goes from $0$ to $1$, $f$ goes from $f(0,\mathbf{z}) \equiv 0$ to $f(\mathbf{x},\mathbf{z})$. As $f(\mathbf{x},\mathbf{z})\leq 0$, the function value $f(\rho\mathbf{x},\mathbf{z})$ gets more negative, so $f$ is a decreasing function w.r.t. $\rho\in[0,1]$.}.
    \item Assumption 4 implies that there always exists a choice of $\alpha\in(0,1)$ such that, no matter the allocation $\textbf{x} \in \mathbb{W}$, $v_f(\textbf{x};\alpha)$ is positive, i.e., a loss.
\end{itemize}

From these assumptions, we derive the following theorem:

\begin{Thm}
If  function $f(\mathbf{x},\mathbf{Z})$ and distribution $p_{\-z}(\cdot)$ satisify 
assumptions 1-4, $\alpha$ is chosen such that $v_f(\mathbf{x},\alpha) \geq 0$ $\forall $ $ w \in \mathbb{W}$, and solutions to the constrained optimization problem exist, then there must exist a solution, denoted as $\mathbf{x}^*$, such that $R(\mathbf{x}^*)=r^{\text{min}}$.
\end{Thm}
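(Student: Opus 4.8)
\emph{Proof strategy.} The plan is to take an arbitrary optimal solution $\mathbf{x}^{0}$ and, if its return constraint is slack ($R(\mathbf{x}^{0})>r^{\text{min}}$), to slide it toward the origin along the ray $\rho\mapsto\rho\mathbf{x}^{0}$, $\rho\in[0,1]$, until the constraint becomes active, showing that the resulting point is still feasible and still optimal. Two facts drive this: (a) a monotonicity lemma, $g(\rho\mathbf{x})\le g(\mathbf{x})$ for all $\mathbf{x}\in\mathbb{W}$ and $\rho\in[0,1]$ --- shrinking a portfolio cannot increase its CVaR; and (b) an intermediate-value argument for the (cheap, well-behaved) constraint $R$ along the ray, using continuity together with $R(\mathbf{0})=\E_{\mathbf{Z}}[f(\mathbf{0},\mathbf{Z})]=0$ from Assumption 2. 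Note also that $\mathbb{W}$ is convex and contains $\mathbf{0}$, so $\rho\mathbf{x}\in\mathbb{W}$ whenever $\mathbf{x}\in\mathbb{W}$.

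For (a) I would use the Rockafellar--Uryasev representation $\mathrm{CVaR}_{\alpha}[L]=\min_{\eta\in\mathbb{R}}F_{L}(\eta)$ with $F_{L}(\eta):=\eta+\frac{1}{1-\alpha}\E[(L-\eta)^{+}]$, which is equivalent to the conditional-expectation definition given above \cite{rockafellar2000optimization} and whose minimiser is $\eta=\mathrm{VaR}_{\alpha}[L]$. Writing $L_{\mathbf{y}}:=-f(\mathbf{y},\mathbf{Z})$, Assumption 4 makes the minimiser $v_{f}(\mathbf{x};\alpha)$ of $F_{L_{\mathbf{x}}}$ non-negative, so $g(\mathbf{x})=\min_{\eta\ge 0}F_{L_{\mathbf{x}}}(\eta)$ and likewise $g(\rho\mathbf{x})\le\min_{\eta\ge 0}F_{L_{\rho\mathbf{x}}}(\eta)$. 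Hence it suffices to establish $F_{L_{\rho\mathbf{x}}}(\eta)\le F_{L_{\mathbf{x}}}(\eta)$ for every $\eta\ge 0$, i.e.\ the pointwise bound $(-f(\rho\mathbf{x},\mathbf{z})-\eta)^{+}\le(-f(\mathbf{x},\mathbf{z})-\eta)^{+}$ for every $\mathbf{z}$ and every $\eta\ge 0$. If the left side vanishes this is clear; otherwise $f(\rho\mathbf{x},\mathbf{z})<-\eta\le 0$, and then Assumption 3 (applied with base point $\mathbf{x}$) gives $f(\rho\mathbf{x},\mathbf{z})\ge f(\mathbf{x},\mathbf{z})$, hence $-f(\rho\mathbf{x},\mathbf{z})\le -f(\mathbf{x},\mathbf{z})$ and the bound --- \emph{provided} $f(\mathbf{x},\mathbf{z})\le 0$.

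The real work is this last proviso: $f(\rho\mathbf{x},\mathbf{z})<0\Rightarrow f(\mathbf{x},\mathbf{z})\le 0$, i.e.\ shrinking a ``gain'' scenario cannot turn it into a ``loss''. I would argue by contradiction: suppose $f(\mathbf{x},\mathbf{z})>0$; then $\rho<1$, and since $s\mapsto f(s\mathbf{x},\mathbf{z})$ is continuous on $[0,1]$ (Assumption 1) with $f(\rho\mathbf{x},\mathbf{z})<0<f(\mathbf{x},\mathbf{z})$, the quantity $s_{1}:=\sup\{s\in[\rho,1]:f(s\mathbf{x},\mathbf{z})\le 0\}$ lies in $(\rho,1)$ and satisfies $f(s_{1}\mathbf{x},\mathbf{z})=0$. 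Applying Assumption 3 at the base point $s_{1}\mathbf{x}\in\mathbb{W}$ (the assumption is stated for an arbitrary allocation in $\mathbb{W}$, so any intermediate point is admissible) shows $s\mapsto f(s\mathbf{x},\mathbf{z})$ is non-increasing on $[0,s_{1}]$, whence $f(\rho\mathbf{x},\mathbf{z})\ge f(s_{1}\mathbf{x},\mathbf{z})=0$, a contradiction. I expect this to be the main obstacle: the whole argument turns on getting the monotone behaviour of CVaR under scaling correct even though $f(\,\cdot\,\mathbf{x},\mathbf{z})$ itself need not be monotone along the ray when it crosses zero, and it is Assumptions 1--2 together with the freedom to reposition the base point in Assumption 3 that rescues it.

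With the lemma in hand, assembling the theorem is routine. Let $\mathbf{x}^{0}$ be an optimal solution (which exists by hypothesis). If $R(\mathbf{x}^{0})=r^{\text{min}}$, take $\mathbf{x}^{*}=\mathbf{x}^{0}$. Otherwise $R(\mathbf{x}^{0})>r^{\text{min}}$; the map $h(\rho):=R(\rho\mathbf{x}^{0})=\E_{\mathbf{Z}}[f(\rho\mathbf{x}^{0},\mathbf{Z})]$ is continuous on $[0,1]$ (Assumption 1 plus the standing integrability needed for $R$ to be defined), with $h(1)>r^{\text{min}}$ and $h(0)=0\le r^{\text{min}}$ (the minimum-return problem being of interest only for $r^{\text{min}}\ge 0$). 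By the intermediate value theorem there is $\rho^{*}\in[0,1)$ with $h(\rho^{*})=r^{\text{min}}$. Put $\mathbf{x}^{*}:=\rho^{*}\mathbf{x}^{0}\in\mathbb{W}$: then $R(\mathbf{x}^{*})=r^{\text{min}}$, so $\mathbf{x}^{*}$ is feasible, and $g(\mathbf{x}^{*})\le g(\mathbf{x}^{0})$ by the lemma; optimality of $\mathbf{x}^{0}$ forces equality, so $\mathbf{x}^{*}$ is an optimal solution with $R(\mathbf{x}^{*})=r^{\text{min}}$, as claimed.
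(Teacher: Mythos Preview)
Your proposal is correct and follows the same high-level scheme as the paper: take an optimal $\mathbf{x}'$, slide toward the origin along $\rho\mapsto\rho\mathbf{x}'$, use the intermediate value theorem on $R$ (via Assumptions~1--2) to hit $r^{\min}$, and show that CVaR cannot increase under this scaling.

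Where you differ is in how you establish the CVaR monotonicity $g(\rho\mathbf{x})\le g(\mathbf{x})$. The paper works directly with the conditional-expectation definition: it first argues $v_f(\mathbf{x}^*;\alpha)\le v_f(\mathbf{x}';\alpha)$ and then chains two inequalities on conditional tail expectations. You instead invoke the Rockafellar--Uryasev variational representation $\mathrm{CVaR}_\alpha[L]=\min_\eta\{\eta+\tfrac{1}{1-\alpha}\E[(L-\eta)^+]\}$ and reduce everything to a single pointwise bound $(-f(\rho\mathbf{x},\mathbf{z})-\eta)^+\le(-f(\mathbf{x},\mathbf{z})-\eta)^+$ for $\eta\ge 0$. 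This buys you a cleaner argument: there is no need to compare VaRs and then separately compare conditional expectations on different events, steps the paper handles rather tersely. Your route also forces you to isolate and prove the implication $f(\rho\mathbf{x},\mathbf{z})<0\Rightarrow f(\mathbf{x},\mathbf{z})\le 0$ (by re-anchoring Assumption~3 at the intermediate zero $s_1\mathbf{x}$), which is a genuine point of care; the paper only needs the forward direction $f(\mathbf{x}',\mathbf{z})\le 0\Rightarrow f(\mathbf{x}',\mathbf{z})\le f(\mathbf{x}^*,\mathbf{z})\le 0$, which falls out of Assumption~3 directly. In short: same skeleton, but your CVaR comparison is more self-contained and arguably more rigorous, at the price of importing an equivalent characterisation from \cite{rockafellar2000optimization}.
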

    
 \begin{proof}
    First, assume that $\mathbf{x}'$ is a solution to the constrained optimization problem. It follows directly that $R(\mathbf{x}')\geq r^{\min}$. Obviously if $R(\mathbf{x}')=r^{\min}$, the theorem holds.
    
    Now consider the case that $R(\mathbf{x}')>r^{\min}$, i.e., it does not lie on the boundary of the feasible region. From assumption 1, $R(\mathbf{x})$ is a continuous function of $\mathbf{x}$ in $\mathbb{W}$. Next define a function \[h(\rho) = R(\rho\mathbf{x}')\] for $\rho \in [0,1]$. As $R(\mathbf{x})$ is a continuous function in $\mathbb{W}$, $h(\rho)$ is a continuous function too.
    
    From assumption 2, we know that $h(0) = 0$, and therefore, $$h(0)=0<r^{\min}<h(1)=R(\mathbf{x}')$$  According to the intermediate value theorem on continuous functions, there exists some $\rho^*\in(0,1)$ such that $h(\rho^*)=R(\rho^*\mathbf{x}')=r^{\min}$. Let $x^*=\rho^*\-x'$ denote this portfolio weight, which lies on the constraint boundary - we wish to compare $F(\-w^*)$ and $F(\-w')$, i.e., the CVaR values at these two portfolio weights for a fixed $\alpha$.
    
    From the Theorem's assumption, we have $v_f(\mathbf{x'},\alpha) \geq 0$ and $v_f(\mathbf{x^*},\alpha) \geq 0$. From assumption 3, we know that for any $\-z$, if $f(\-x',\-z)\leq0$, then $f(\-x',\-z)\leq f(\-x^*,\-z)\leq0$.
    
    It follows that for any $\-z\in \{\-z|f(\-x',\-z)\leq - v_f(\-x',\alpha)\}$, we have $$f(\-x',\-z)\leq f(\-x^*,\-z) \leq - v_f(\-x^*,\alpha)\leq 0.$$
      
    As such, we can derive $v_f(\-w^*,\alpha) \leq v_f(\-w',\alpha)$, and obtain, 
    \begin{align*}
    	\text{CVaR}_{\alpha}[f(\textbf{x}^*,\textbf{Z})]
        &= -\E[f(\textbf{x}^*,\textbf{Z})|f(\textbf{x}^*,\textbf{Z})\leq -v_{f}(\textbf{x}^*;\alpha)] \\
        &\leq -\E[f(\textbf{x}^*,\textbf{Z})|f(\textbf{x}^*,\textbf{Z})\leq -v_{f}(\textbf{x}';\alpha)] \\
        &\leq -\E[f(\textbf{x}',\textbf{Z})|f(\textbf{x}',\textbf{Z})\leq -v_{f}(\textbf{x}';\alpha)]  \\
        &= \text{CVaR}_{\alpha}[f(\textbf{x}',\textbf{Z})].
    \end{align*}
    Therefore, $\-x^*$ is also a minimal solution w.r.t. the objective function and $R(\-x^*)=r^{\min}$. The proof is thus complete.
    \end{proof}

Theorem 1 states that under reasonable assumptions, the constraint (Eq. ~\ref{e:risk}) is active for at least one solution. This implies that a higher expected return can only be obtained by increasing risk exposure, and thus the CVaR. The optimal solution to our problem will likely arise from an active constraint, where the minimum expected return constraint is limiting our ability to reduce the CVaR further. This provides a useful heuristic and motivates sampling towards the constraint boundary - referred to as the \emph{active region}. 

\subsection{Two-Stage Weight Selection} 
Theorem 1 suggests that we can find a solution to problem \eqref{e:opa} by searching 
along the boundary of the minimal expected return constraint. 
Specifically, based on the expected return value for a proposed portfolio weight $\textbf{x}$, we 
may decide not to evaluate the CVaR objective function in the following two situations. 
Firstly, if the expected return is lower than the minimum constraint threshold, the proposed portfolio weight is not feasible and as such, the CVaR function does not need to be evaluated. Secondly, if the expected return is too high (i.e., not approximately active), the corresponding CVaR is likely far from optimal, so the objective need not be evaluated. We account for this through the introduction of a maximum expected return parameter, denoted by $r^{\max}$, which is set on the basis that those points with expected returns higher than this parameter value are highly unlikely to be optimal for our objective. Based on these observations, we introduce a two-stage weight selection procedure. In the first stage, a portfolio weight is selected based on the acquisition function.
In the second stage, we calculate the expected return. If the expected return satisfies the requirement that \begin{equation}
r^{\min}\leq R(\textbf{x})=\E_\textbf{Z}[f(\textbf{x},\textbf{Z})]\leq r^{\max},\label{e:const_check}
\end{equation}
we complete the more expensive evaluation of our objective function, to determine the CVaR value. We then update the GP for both the constraint and objective.
If Eq.~\eqref{e:const_check} is not satisfied, we reject the proposed portfolio weight, do not evaluate the objective function and only update the GP for the expected return constraint, to ensure this weight is not re-proposed. We make no changes to the  objective function GP.
This two-stage (2S) adaptation has the advantage of only fully evaluating those points which are feasible and (approximately) active, and as such, it reduces the number of evaluations of the expensive-to-evaluate CVaR objective. The algorithm obtains two training sets, one for the CVaR objective and one for the expected return, with the former being a subset of the latter.

\subsection{New Acquisition Function}
With the two-stage selection procedure, it is clear that we will complete many more evaluations of the expected return constraint than the CVaR objective, as such, the GP for the constraint will be more accurate than that of the objective function. As a result, the CW-EI acquisition function will be highly effective at proposing feasible points, due to the quality of the constraint GP, but may be poor at proposing points with low CVaR, due to the lower quality of the objective GP. To address this, we propose a new acquisition function based on the active constraint assumption.

Namely, as the CW-EI acquisition function only accounts for the feasibility of the constraint, we want to incorporate the activeness as well. Let $\widetilde{R}(\textbf{x})$ be a GP model of the expected return $R(\textbf{x})$, we define
\begin{equation}
\text{PF}(\textbf{x}) = \PR(r^{\text{min}}\leq \widetilde{R}(\textbf{x}) \leq r^{\text{max}} )
\end{equation}
which is the probability that a design point $\textbf{x}$ is both feasible and approximately active. 
 As these two events are conditionally independent given $\textbf{x}$, we have
\begin{equation}\label{e:newpf}
    \begin{split}
    \text{PF}({\textbf{x}}) =\text{PF}_{\text{min}}({\textbf{x}})  \times \text{PF}_{\text{max}}({\textbf{x}}) \\
        \text{PF}_{\text{min}}({\textbf{x}}) = \PR(\widetilde{R}({\textbf{x}}) \geq r^{\text{min}})\\
        \text{PF}_{\text{max}}({\textbf{x}}) = \PR(\widetilde{R}({\textbf{x}}) \leq r^{\text{max}})
    \end{split}
\end{equation}
Combining Eq.~\eqref{e:newpf} with the Expected Improvement we obtain:
\begin{equation}\label{Eqs:ACQ}
	a_{\text{ACW-EI}}(\textbf{x}) = \text{EI}(\textbf{x})\text{PF}_{\text{min}}(\textbf{x})\text{PF}_{\text{max}}(\textbf{x}),
\end{equation}
which is hereafter referred to as the \emph{active constraint-weighted expected improvement} (ACW-EI) acquisition function.
Note that this acquisition function depends on both the GP models for CVaR and the expected return. In this paper, we write it as $a_{\text{ACW-EI}}(\textbf{x},\widetilde{g},\widetilde{R})$.

The new term $\text{PF}_{\max}$ in the acquisition function encourages the proposed points to be approximately active, which, by proxy, increases the likelihood that such a point is near-optimal with respect to the risk measure objective function. The choice of $r^\text{max}$ is explored through additional numerical examples in the SI. The inclusion of this parameter is a crucial aspect of our proposed BO algorithms. Two feasible points with different true objective function values are likely to have similar expected improvement values (prior to full evaluation), due to the low-quality GP for the objective function and equal probability of feasibility for the constraint. As such, in the existing methodology, the two points may be considered equally. By introducing the new $r^{\text{max}}$ term - based on the more accurate expected return GP - our proposed BO procedure is able to differentiate between these two points during the selection procedure.

\subsection{The complete algorithm}
To complete our proposed algorithm, we must discuss the summation constraint:\\[0.5ex]
\centerline{$0\leq x_i\leq 1,\, i=1,...,N,\, \sum_{i=1}^N x_i \leq 1,$}\\[0.5ex]
which will be denoted as $\textbf{x}\in S$ in what follows. 
It is possible to deal with these constraints in the same manner as the expected return, i.e., as GP models. However, unlike the expected return constraint, which is probabilistic, the summation constraint is deterministic and easy to evaluate. As such, we impose the constraint during the maximisation of the acquisition function, by solving the following constrained maximization problem:
$\max_{\textbf{x}\in S} a_{\text{ACW-EI}} (\textbf{x})$,
which in this work is solved with the barrier method. 

Finally, by combining the two-stage point selection, the ACW-EI acquisition function, 
and the constrained acquisition maximization, we obtain a complete 2S-ACW-EI BO algorithm, detailed in Alg.~\ref{alg:2s-acw-ei}.


\begin{algorithm}
\caption{The 2S-ACW-EI BO algorithm}\label{alg:2s-acw-ei}
\begin{algorithmic}
\State Initialize the training data sets $D$ (for the objective) and $C$ (for the constraint),
 using an initial design;
\State Let $t=1$;
\While{stopping criteria not met}
      \State Construct an objective function GP model $\widetilde{g}_{t-1}$ using $D$;
      \State Construct a constraint GP model $\widetilde{R}_{t-1}$ using $C$;
   \State Let $\hat{\textbf{x}} = \arg\max_{\textbf{x}\in S} a_{\text{ACW-EI}}(\textbf{x},\widetilde{g}_{t-1},\widetilde{R}_{t-1})$;
\State Evaluate the constraint $R(\hat{\textbf{x}})$;
\State Let {$C = C\cup\{\hat{\textbf{x}},R(\hat{\textbf{x}})\}$;}
\If{$r^{\min}\leq R(\hat{\textbf{x}})\leq r^{\max}$}
\State Evaluate the objective $g(\hat{\textbf{x}})$;
\State Let $D = D\cup\{\hat{\textbf{x}},g(\hat{\textbf{x}})\}$;
\State let $t=t+1$;
\EndIf
\EndWhile
\end{algorithmic}
\end{algorithm}

\subsection{Batch Implementation}
In most BO approaches, one uses an acquisition function to select a single point to evaluate. From which, the posterior GPs are updated and the process is repeated. This is \emph{sequential}, as each point is selected and evaluated one at a time.

It is expensive to evaluate the objective function, and as such, it may be advantageous to evaluate several points simultaneously, for example using parallel computers. 
In this regard, a batch implementation of BO is desirable, where several design points are selected using the acquisition function and then evaluated simultaneously in parallel. In this section, we discuss a batch implementation for our proposed algorithms.

In most batch BO methods, the batch of design points is determined sequentially via a given point-selection procedure, from which the objective and constraint functions are evaluated after the whole batch is obtained. 
In our two-stage method, evaluation of the expected return constraint is included in the point-selection procedure and once the whole batch is obtained, the CVaR objective is evaluated in parallel.
More specifically, the expected return is evaluated for each new proposed point. If the expected return satisfies Eq.~\eqref{e:const_check}, it is added to the batch and the constraint GP is updated. 
If the expected return does not satisfy Eq.~\eqref{e:const_check}, the point is not added to our batch but the GP for the constraint is updated, to ensure that the point is not proposed again. 
Once a batch has been determined, each point is fully evaluated - knowing that all batch points are both feasible and approximately active. The pseudo-code for our two-stage batch selection is provided in Alg.~\ref{alg:batch}.

As the batch approach can be implemented in parallel, it has a lower computational cost. However, the batch approach requires a greater total number of samples to converge to the optimal solution - as demonstrated in our numerical examples - due to the GPs being updated less frequently, so each sample is chosen on the basis of a less accurate GP compared to at the equivalent stage in the sequential approach.

\kern -1\medskipamount
\begin{algorithm}
\caption{Two-Stage Batch Selection}\label{alg:batch}
\begin{algorithmic}
\Require  
a training set for the CVaR objective function $D$, 
a training set for the expected return constraint $C$
\Ensure a batch of $b$ design points,
\State let $B=\emptyset$
\State let $i=0$;
\While{$i<b$}
      \State propose a new design point $\hat{\textbf{x}}$ based on  a prescribed selection rule;
   \State evaluate the constraint $R(\hat{\textbf{x}})$;
   \If{$r^{\min}\leq R(\hat{\textbf{x}}) \leq r^{\max}$}
\State let $B = B\cup\{\hat{\textbf{x}}\}$;
   \State let $i=i+1$;
 \EndIf
 \State let $C=C\cup\{\hat{\textbf{x}},R(\hat{\textbf{x}})\}$;
 \State update the GP model for the constraint using $C$;
\EndWhile
\end{algorithmic}
\end{algorithm}

\kern-\medskipamount
\section{Numerical Experiments}
In this section, we apply the proposed BO algorithms to several numerical examples. BO was implemented using \emph{Trieste} \cite{Berkeley_Trieste_2023}, a BO Python package built on TensorFlow. We used the default Matern 52 Kernel with a length scale of 1.0 and noise variance of $10^{-7}$. For acquisition maximization, the summation constraint was included as a barrier function, and the problem was solved using the Efficient Global Optimization method provided by the package.

\kern-\medskipamount
\subsection{Mathematical example}
We first consider a simple mathematical example, to demonstrate how the design points are selected by the different methods. Adapted from \cite{gramacy2016modeling}, we seek to solve the following constrained optimization problem:
\begin{equation}
    \begin{split}
        \underset{\textbf{x}}{\min} \; f(\textbf{x}):= & -x_1 - x_2 \\
        \text{s.t.} ~ c(\textbf{x}) := & \frac{3}{2} - x_1 - 2x_2 
         - \frac{1}{2} \text{sin}(2\pi (x_1^{2}-2x_2))\geq 0
    \end{split}
    \label{eq:ME}
\end{equation}
The solution to the problem is $\textbf{x} = (0.918, 0.540)$, where $f(\textbf{x}) = 1.458$. The original CW-EI method, ACW-EI (i.e. the new acquisition function without the 2S process), and 2S-ACW-EI each use 10 initial points and then a further 50 iterations. Figure \ref{fig:Gramacy} shows the design points obtained by each of the three algorithms.

\begin{figure*}
 \centerline{ \includegraphics[width=.9\textwidth]{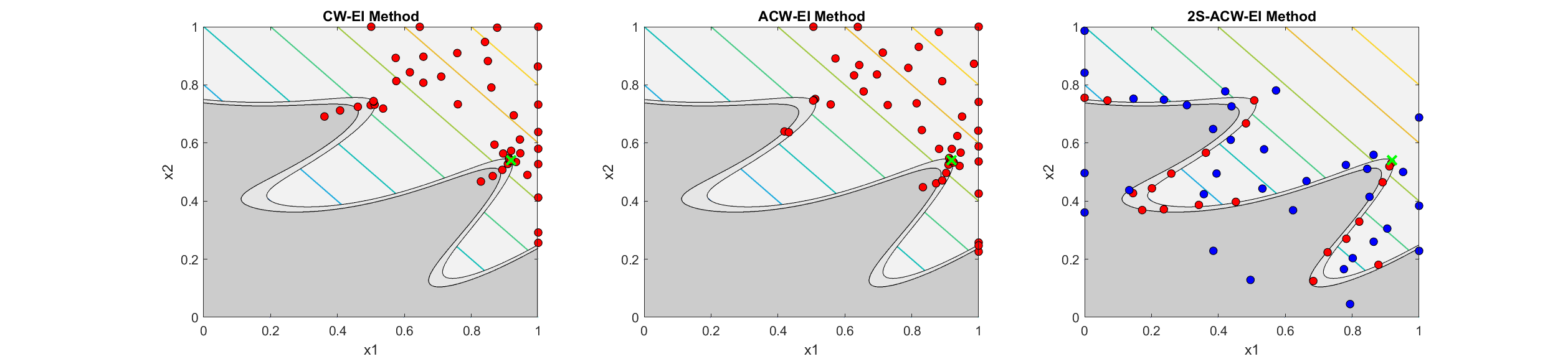}}
  \caption{Plots showing the optimal solution (green-x) for numerical example one and the design points generated by each of the three methods. The figures include both the fully evaluated points (red) and those for which only the constraint was evaluated (blue). The feasible region is dark grey, the active region is light grey and the infeasible region is white. The objective function contours are shown too.}
  \label{fig:Gramacy}
\end{figure*}

The CW-EI and ACW-EIs methods perform similarly in this task, where the algorithms generate a significant number of infeasible samples with high objective value, before moving towards the feasible region. Both methods first establish a good GP for the objective function, encouraging samples to be generated in the high objective region, before the GP for the constraint is fully formed.  
In contrast, in the 2S-ACW-EI method, samples are only fully evaluated if they are in the active region, therefore after a few iterations, the GP for the objective and constraint functions are weak and strong respectively. Thanks to the well-formed GP model for the constraint, the acquisition function prioritises the generation of points in the feasible region, in particular, in the active region, before finding those feasible points which are maximised for the objective.

\subsection{Portfolio allocation examples}\label{sec:portfolio} 

\subsubsection{Problem setup}
The following three examples are based on an investor seeking to optimally allocate capital to stock or stock options, related to the twenty largest technology companies listed on American stock exchanges (both the NYSE \& Nasdaq) by market capitalisation. We take $\textbf{Z}$ to be the stock price at the future time, assumed to be normally distributed, where the distribution parameters are determined by historical data (see Table 1 in SI).

In all three  examples, the return function is
\begin{equation}
    f(\textbf{x},\textbf{z}) = \sum_{i=1}^{20} x_{i}y_{i}(z_i),
\end{equation}
where $y_i$ is the asset return - stated as a ratio, rather than absolute value - corresponding to the $i$-th company, a function of its future stock price $z_i$.
In the three examples, we alter the asset type -- namely the function $y_i(z_i)$ varies. In each example, we consider a lower and higher return constraint. 

\textbf{Example One.} The investor's capital is allocated directly to the twenty stocks, with $y_i(z_i):=z_i/\bar{z}_i$, where $\bar{z}_i$ is the stock's purchase price. The constraints are set for $1 - \alpha = 0.0001$ with $r^{min} =$ (a) $1.45$ and (b) $1.55$.

\textbf{Example Two.} The capital is allocated to European Call options based on the twenty stocks, held until expiry. A European Call option gives the owner the right to purchase the underlying asset at a pre-agreed strike price on a specified future date. If the current bid price of the call option for the $i$-th stock is $b_i$ and the strike price is $K_i$, the asset return is:\\
\centerline{   $ y_i(z_i) := \frac{\max(0, z_i - K_i) - b_i}{b_i}$.}\\
The constraints are set for $1 - \alpha = 0.0001$ with $r^{min} =$ (a) $5.30$ and (b) $5.40$.

\textbf{Example Three.} The return is derived from selling European Call options after six months rather than holding them to maturity. The return depends on the change in the option price, modelled using quadratic functions of the underlying asset returns, realized through a delta-gamma approximation. The associated call option return becomes:
\\
  \centerline{ $y_i(z_i) := \Delta_i\;\epsilon + \frac{1}{2}\;\Gamma_i\;\epsilon^{2}$,}\\
where $\epsilon=z_i-\bar{z}_i$. The constraints are set for $1 - \alpha = 0.0001$ with $r^{min} =$ (a) $2.90$ and (b) $3.00$.
\begin{figure*}
 \centerline{ \includegraphics[width=0.73\textwidth]{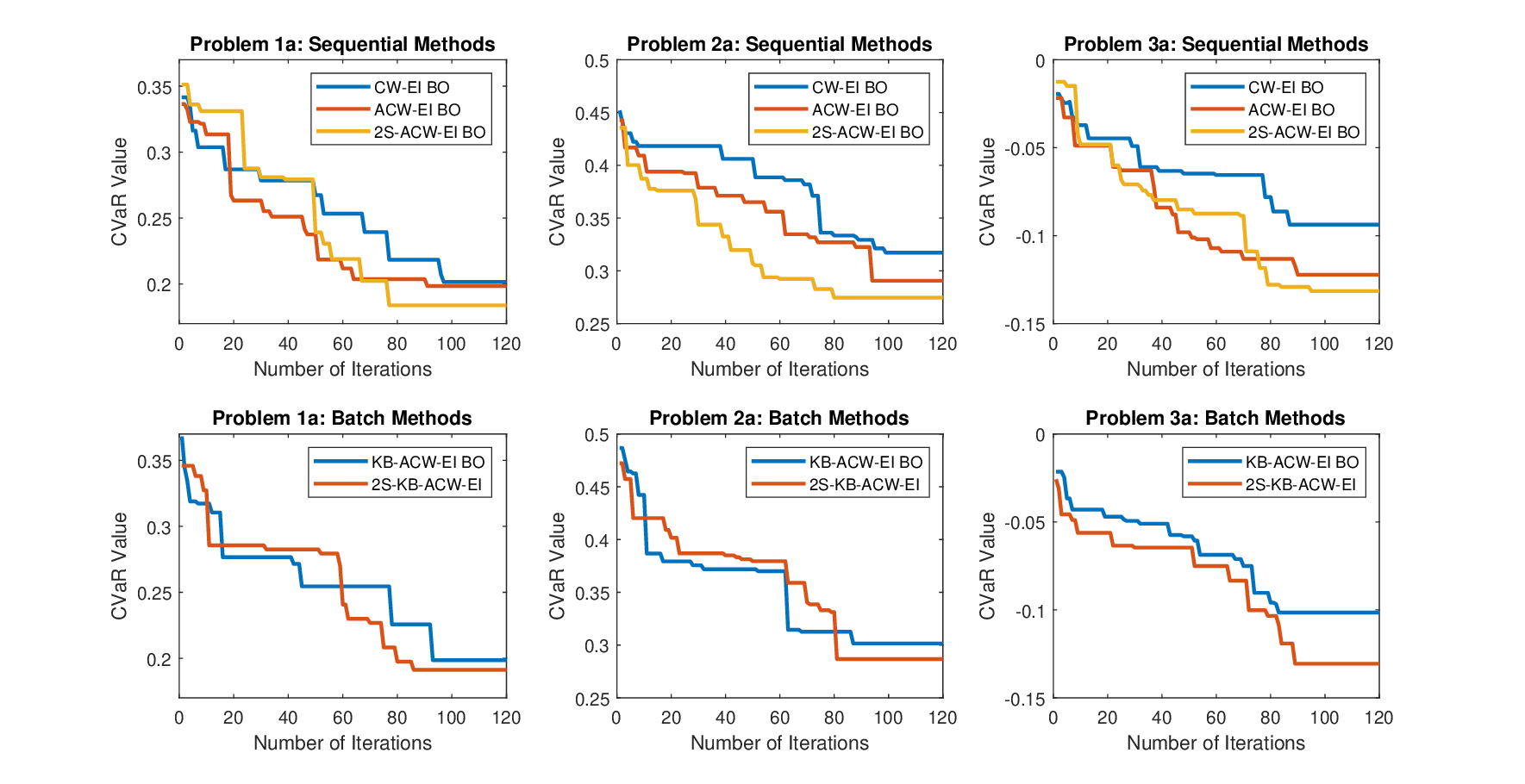}}
 \centerline{ \includegraphics[width=0.73\textwidth]{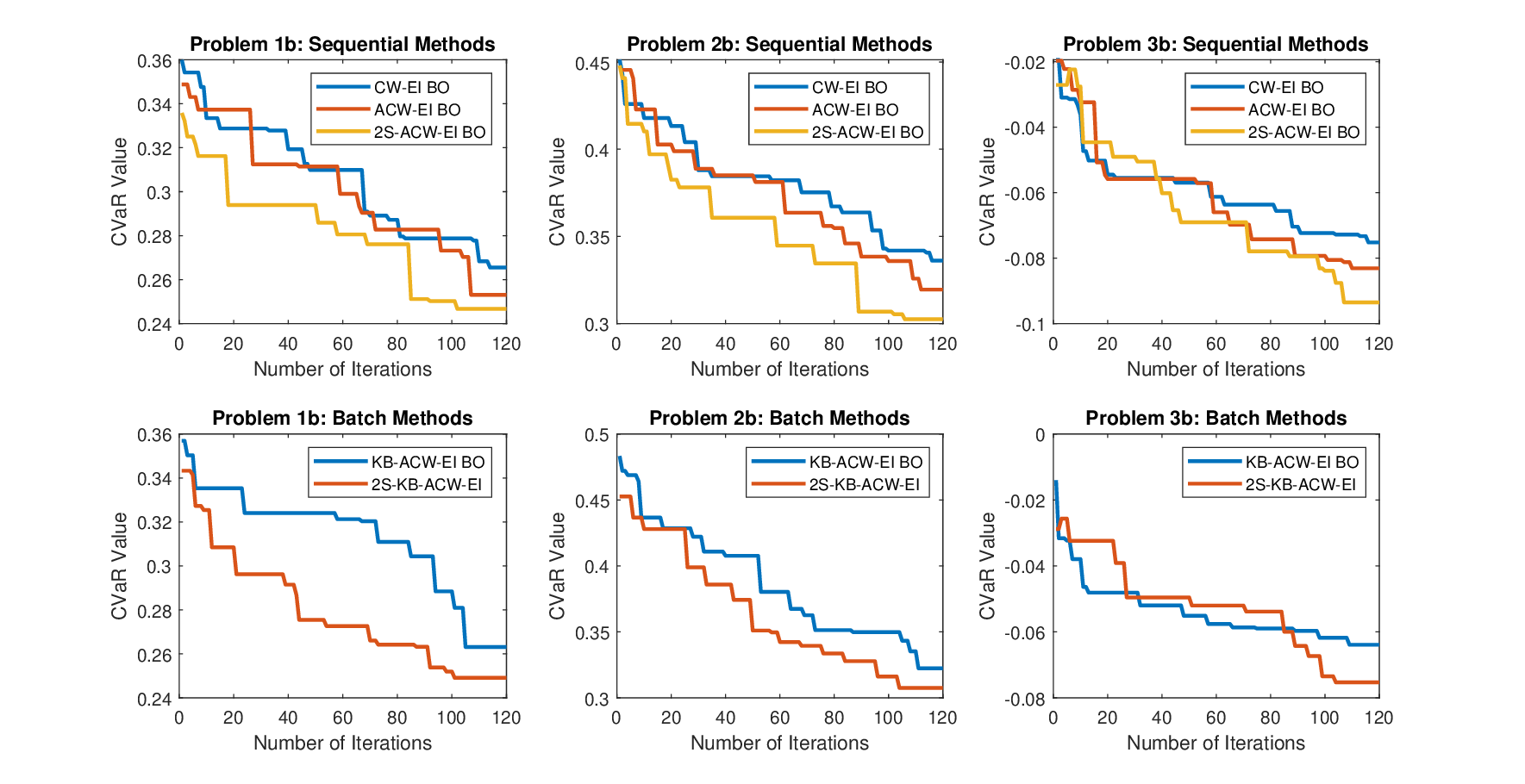}}
 
  \kern -4\medskipamount
  \caption{The best objective value obtained after each iteration for the portfolio allocation problems across the existing method (CW-EI BO) and the four new proposed methods.}
  \label{fig:Best}
\end{figure*}

\subsubsection{Experimental Results}
In all three examples, we applied the three sequential and two batch methods. We used 10 initial portfolio weights, 110 iterations for the sequential methods, and 11 batches of size 10 for the batch methods. We set $r^{\max} = 110\% r^{\min}$ in our numerical experiments. All experiments were repeated 20 times. The results are given in Table~\ref{TAB:Result_Port_All}.

\begin{table*}[]
\centering
\resizebox{0.69\textwidth}{!}
{
\begin{tabular}{|l|lll|ll|}
\hline
\multicolumn{1}{|l|}{} & \multicolumn{3}{l|}{\textbf{Sequential BO Methods}} & \multicolumn{2}{l|}{\textbf{Batch BO Methods}} \\
\cline{2-6}
\multicolumn{1}{|l|}{} & \multicolumn{1}{l|}{\textbf{CW-EI}} & \multicolumn{1}{l|}{\textbf{ACW-EI}} & \textbf{2S-ACW-EI} & \multicolumn{1}{l|}{\textbf{KB-ACW-EI}} & \textbf{2S-KB-ACW-EI} \\
\hline\hline
\multicolumn{1}{|l|}{\textbf{1a CVaR (SD)}} & \multicolumn{1}{l|}{$0.202~(0.013)$} & \multicolumn{1}{l|}{$0.199~(0.013)$} & $\mathbf{0.184}~(0.012)$ & \multicolumn{1}{l|}{$0.199~(0.012)$} & $0.191~(0.012)$ \\ 
\hline
\multicolumn{1}{|l|}{\textbf{1a Ex Return (SD)}} & \multicolumn{1}{l|}{$1.473~(0.012)$} & \multicolumn{1}{l|}{$1.485~(0.012)$} & $1.473~(0.012)$ & \multicolumn{1}{l|}{$1.479~(0.012)$} & $1.478~(0.012)$ \\
\hline\hline
\multicolumn{1}{|l|}{\textbf{1b CVaR (SD)}} & \multicolumn{1}{l|}{$0.266~(0.012)$} & \multicolumn{1}{l|}{$0.253~(0.012)$} & $\mathbf{0.247}~(0.012)$ & \multicolumn{1}{l|}{$0.263~(0.014)$} & $0.249~(0.013)$ \\
\hline
\multicolumn{1}{|l|}{\textbf{1b Ex Return (SD)}} & \multicolumn{1}{l|}{$1.581~(0.012)$} & \multicolumn{1}{l|}{$1.577~(0.012)$} & $1.561~(0.012)$ & \multicolumn{1}{l|}{$1.580~(0.012)$} & $1.567~(0.012)$ \\
\hline\hline
\multicolumn{1}{|l|}{\textbf{2a CVaR (SD)}} & \multicolumn{1}{l|}{$0.317~(0.013)$} & \multicolumn{1}{l|}{$0.291~(0.015)$} & $\mathbf{0.275}~(0.014)$ & \multicolumn{1}{l|}{$0.302~(0.013)$} & $0.287~(0.013)$ \\
\hline
\multicolumn{1}{|l|}{\textbf{2a  Ex Return (SD)}} & \multicolumn{1}{l|}{$5.335~(0.013)$} & \multicolumn{1}{l|}{$5.320~(0.012)$} & $5.302~(0.013)$ & \multicolumn{1}{l|}{$5.341~(0.012)$} & $5.322~(0.012)$ \\
\hline\hline
\multicolumn{1}{|l|}{\textbf{2b CVaR (SD)}} & \multicolumn{1}{l|}{$0.336~(0.014)$} & \multicolumn{1}{l|}{$0.320~(0.014)$} & $\mathbf{0.303}~(0.013)$ & \multicolumn{1}{l|}{$0.322~(0.013)$} & $0.308~(0.013)$ \\
\hline
\multicolumn{1}{|l|}{\textbf{2b Ex Return (SD)}} & \multicolumn{1}{l|}{$5.427~(0.013)$} & \multicolumn{1}{l|}{$5.428~(0.012)$} & $5.417~(0.013)$ & \multicolumn{1}{l|}{$5.433~(0.013)$} & $5.420~(0.012)$ \\
\hline\hline
\multicolumn{1}{|l|}{\textbf{3a CVaR (SD)}} & \multicolumn{1}{l|}{$-0.094~(0.012)$} & \multicolumn{1}{l|}{$-0.122~(0.014)$} & $\mathbf{-0.132}~(0.013)$ & \multicolumn{1}{l|}{$-0.102~(0.012)$} & $-0.131~(0.014)$ \\
\hline
\multicolumn{1}{|l|}{\textbf{3a  Ex Return (SD)}} & \multicolumn{1}{l|}{$3.105~(0.013)$} & \multicolumn{1}{l|}{$3.030~(0.013)$} & $2.938~(0.012)$ & \multicolumn{1}{l|}{$3.082~(0.013)$} & $2.97~(0.013)$ \\
\hline\hline
\multicolumn{1}{|l|}{\textbf{3b CVaR (SD)}} & \multicolumn{1}{l|}{$-0.075~(0.013)$} & \multicolumn{1}{l|}{$-0.083~(0.013)$} & $\mathbf{-0.094}~(0.014)$ & \multicolumn{1}{l|}{$-0.064~(0.012)$} & $-0.075~(0.013)$ \\
\hline
\multicolumn{1}{|l|}{\textbf{3b  Ex Return (SD)}} & \multicolumn{1}{l|}{$3.113~(0.013)$} & \multicolumn{1}{l|}{$3.075~(0.013)$} & $3.056~(0.012)$ & \multicolumn{1}{l|}{$3.125~(0.012)$} & $3.089~(0.013)$ \\ 
\hline
\end{tabular}
}
\caption{Average of the best objective and constraint values across repeated experiments: in each case, the best result among the methods is shown in bold. The standard deviations are given in parentheses.}
\label{TAB:Result_Port_All}
\end{table*}

For all three examples, our proposed sequential methods outperformed the standard BO approach, finding a lower CVaR objective value while meeting the feasibility condition. Additionally, the two-stage approach produced better results than the one-stage approach. The same trend was observed for the batch methods, with the two-stage method outperforming the one-stage method. The batch methods obtained better results than the standard sequential BO method but performed worse than the best sequential implementations. This outcome is as expected, due to the GP only being updated after a full batch of samples is identified, whereas in the sequential approach, the GP is updated for each new sample. Using parallel implementation, the batch method is significantly faster than the sequential approach. To further illustrate the results, we plot the best solution's objective value after each iteration in Fig.~\ref{fig:Best}. Consistently, the best solution of 2S-ACW-EI decreases faster than the other two sequential methods. The two-stage batch method performs better than the standard implementation in all cases.

\begin{figure*}
 \centerline{ \includegraphics[width=0.73\textwidth]{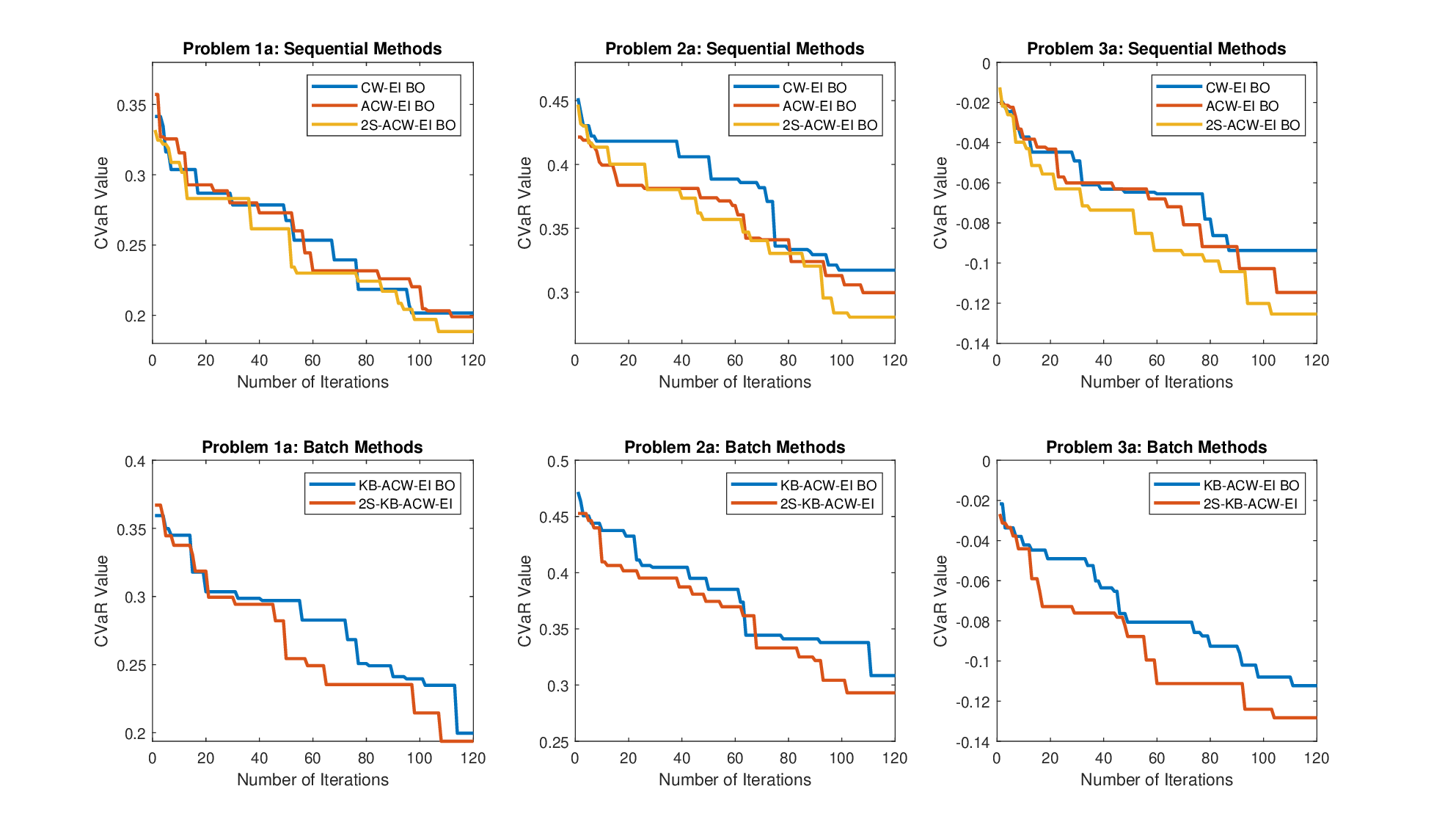}}

 \kern-4\medskipamount
  \caption{The best objective value obtained after each iteration for problems 1a - 3a, with $r^{\max} = 105\% r^{\min}$.}
  \label{fig:more}
\end{figure*}
    \begin{table*}[]
    \centering
    \resizebox{.69\textwidth}{!}
        {
        \begin{tabular}{|l|lll|ll|}
        \hline
        \multicolumn{1}{|l|}{} & \multicolumn{3}{l|}{\textbf{Sequential BO Methods}} & \multicolumn{2}{l|}{\textbf{Batch BO Methods}} \\
        \cline{2-6}
        \multicolumn{1}{|l|}{} & \multicolumn{1}{l|}{\textbf{CW-EI}} & \multicolumn{1}{l|}{\textbf{ACW-EI}} & \textbf{2S-ACW-EI} & \multicolumn{1}{l|}{\textbf{KB-ACW-EI}} & \textbf{2S-KB-ACW-EI} \\
        \hline\hline
        \multicolumn{1}{|l|}{\textbf{1a CVaR (SD)}} & \multicolumn{1}{l|}{$0.202~(0.013)$} & \multicolumn{1}{l|}{$0.198~(0.011)$} & $\mathbf{0.188}~(0.014)$ & \multicolumn{1}{l|}{$0.201~(0.0.014)$} & $0.194~(0.011)$ \\ 
        \hline
        \multicolumn{1}{|l|}{\textbf{1a Ex Return (SD)}} & \multicolumn{1}{l|}{$1.473~(0.012)$} & \multicolumn{1}{l|}{$1.473~(0.018)$} & $1.471~(0.013)$ & \multicolumn{1}{l|}{$1.477~(0.016)$} & $1.474~(0.017)$ \\
        \hline\hline
        \multicolumn{1}{|l|}{\textbf{2a CVaR (SD)}} & \multicolumn{1}{l|}{$0.317~(0.013)$} & \multicolumn{1}{l|}{$0.299~(0.014)$} & $\mathbf{0.281}~(0.014)$ & \multicolumn{1}{l|}{$0.308~(0.011)$} & $0.293~(0.017)$ \\
        \hline
        \multicolumn{1}{|l|}{\textbf{2a  Ex Return (SD)}} & \multicolumn{1}{l|}{$5.335~(0.013)$} & \multicolumn{1}{l|}{$5.324~(0.013)$} & $5.317~(0.016)$ & \multicolumn{1}{l|}{$5.331~(0.012)$} & $5.323~(0.013)$ \\
        \hline\hline
        \multicolumn{1}{|l|}{\textbf{3a CVaR (SD)}} & \multicolumn{1}{l|}{$-0.094~(0.012)$} & \multicolumn{1}{l|}{$-0.115~(0.016)$} & $\mathbf{-0.125}~(0.013)$ & \multicolumn{1}{l|}{$-0.112~(0.014)$} & $-0.128~(0.015)$ \\
        \hline
        \multicolumn{1}{|l|}{\textbf{3a  Ex Return (SD)}} & \multicolumn{1}{l|}{$3.105~(0.013)$} & \multicolumn{1}{l|}{$3.103~(0.014)$} & $3.083~(0.018)$ & \multicolumn{1}{l|}{$3.102~(0.018)$} & $3.061~(0.013)$ \\
        \hline
        \end{tabular}
        }
        \caption{Same results as those in Table \ref{TAB:Result_Port_All}, for problems 1a-3a, obtained with $r^{\max} = 105\% r^{\min}$.}
    \kern -6\medskipamount
        
    \label{TAB:Result_Port_All_2}
    \end{table*}

Finally, regarding the choice of $r^{\max}$; our numerical experiments found that setting $r^{\max} = 110\% r^{\min}$ generally works well. To test more rigorously how sensitive our proposed BO algorithm is to this parameter, we provide further numerical results obtained with $r^{\max} = 105\% r^{\min}$, in Table~\ref{TAB:Result_Port_All_2}, as well as graphical results in Fig.~\ref{fig:more}. 
These results are quantitatively similar to those with $r^{\max} = 110\% r^{\min}$, showing that the proposed algorithms are not highly sensitive to the choice of $r^{\max}$.

\section{Conclusion} \label{Section:Conclusion}
In this paper, we addressed the optimal portfolio allocation problem, which aims to minimize a computationally intensive risk measure under a minimum expected return constraint. We proposed four new BO algorithms specifically designed for such problems, significantly reducing the number of evaluations of the expensive objective function. These methods leverage the special properties of portfolio optimization problems by developing a new acquisition function, a two-stage portfolio weight selection process, and a batch implementation that takes advantage of parallel computing.


Several future directions can enhance our proposed methods. Firstly, the proposed method may not find the optimal solution for problems where the solutions do not lie on the boundary of the expected return constraint. In critical applications such as automated investing, returning a sub-optimal solution can have serious consequences. Therefore, it is crucial to implement mechanisms to ensure the reliability and safety of the algorithms. A heuristic strategy is to search around the obtained solution to find a better one. This issue should be carefully studied in the future.


Overall, our proposed BO algorithms offer a promising approach to solving the computationally intensive risk minimization problems in portfolio optimization, with potential applications across various fields.
\bibliographystyle{ACM-Reference-Format}
\bibliography{Bib_BO_MMC}
\newpage
\appendix

\bigskip

\centerline{\bf\LARGE Supplementary Information}

\subsection*{A.1 Monte Carlo Estimation of CVaR}\label{Section:MC}
Within Bayesian Optimization (BO), the acquisition function is used to propose a new design point $\textbf{x}$, from which the complete return distribution $f(\textbf{x},\textbf{Z})$ can be obtained. A Monte Carlo (MC) simulation can be used to construct this distribution, from which the expected return and CVaR can be obtained - the details of which we provide here.

Let $\mathbf{x}$ be the proposed design point. The return function is subject to uncertainty from the environmental variable $\textbf{Z}$ which follows a probability distribution $p_\textbf{z}(\cdot)$. Let $y$ be a scalar characterised by the return function $y = f(\textbf{x},\textbf{Z})$. We want to determine the probability density function (PDF) of y, given by $\pi(y)$, where $\mathbf{z}$ and $y$ are continuous random variables.

For the sake of convenience, we assume that $\pi(y)$ has a bounded support $R_y=[a,b]$. If the support of $\pi(y)$ is not bounded, we choose an interval $[a,b]$ that is sufficiently large so that $\mathbb{P}(y\in[a,b])\approx1$.
We first decompose $R_y$ into $M$ bins of equal width $\Delta$ centred at the discrete values $\{b_1,...,b_M\}$ and define the $i$-th bin as the interval $B_i = [b_i - \Delta/2, b_i + \Delta/2]$.
This binning implicitly defines a partition of the input space into $M$ domains $\{D_i\}_{i=1}^{M}$, where
    \begin{equation}
		D_i = \{\mathbf{z} \in \textbf{Z} : f(\textbf{x},\textbf{z}) \in B_i\}
	\end{equation}
is the domain in $\textbf{Z}$ that maps into the $i$-th bin $B_i$ (see Fig.~\ref{fig:MultiBin}).

\begin{figure}[h]
	\centering
	\includegraphics[width=0.8\columnwidth]{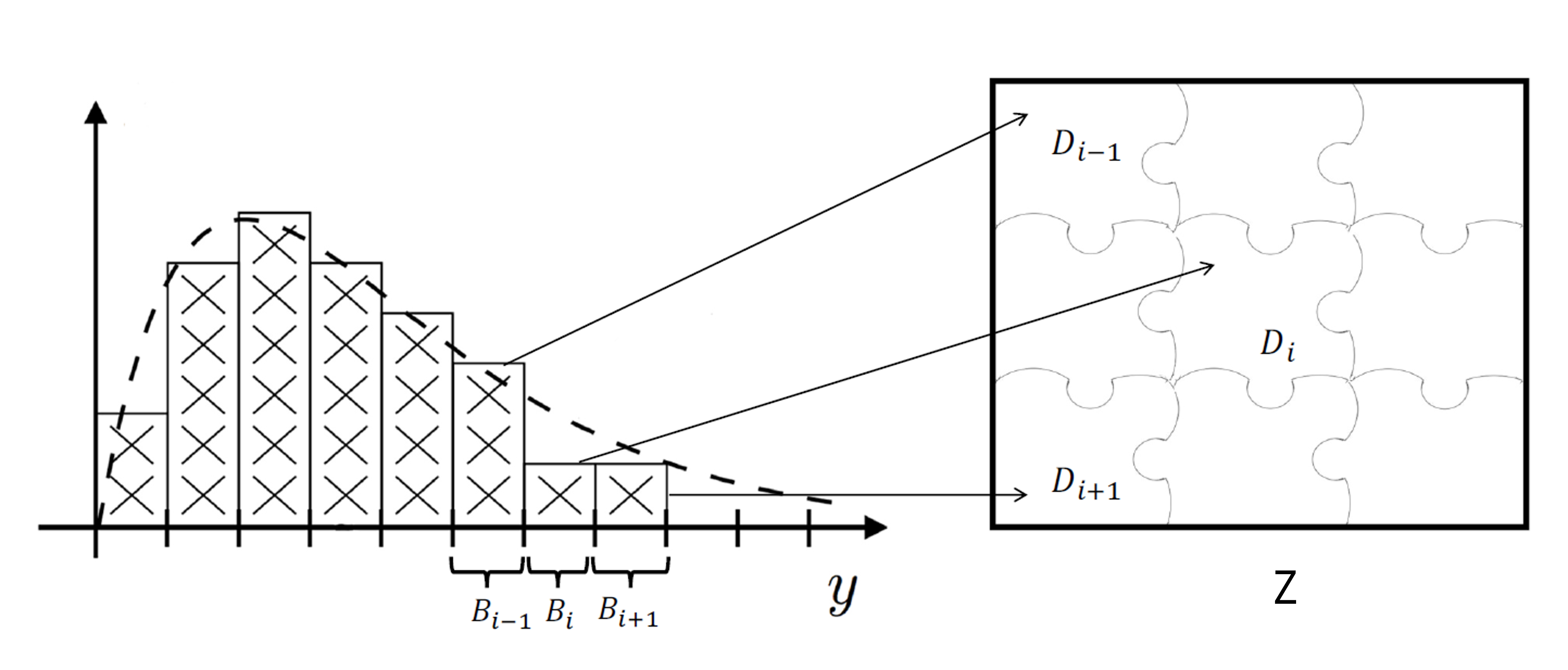}
	\caption{Schematic illustration of the connection between $B_i$ and $D_i$. }
	\label{fig:MultiBin}		
\end{figure}

While $B_i$ are simple intervals, the domains $D_i$ are multidimensional regions with possibly tortuous topologies. Therefore, an indicator function is used to classify whether a given $\mathbf{z}$-value is in bin $D_i$ or not. Formally, the indicator function is defined as
	\begin{equation}
		I_{D_i}(\mathbf{z}) = \begin{cases}
			1, & \text{if $\mathbf{z} \in D_i$};\\
			0, & \text{otherwise}
		\end{cases}
	\end{equation}
or equivalently $\{y = f(\textbf{x},\textbf{z}) \in B_i\}$.
By using this indicator function, the probability  that $y$ is in the $i$-th bin, i.e. $P_i = \PR\{y \in B_i\}$, can be written as an integral in the input space:
	\begin{equation}
		P_i = \int_{D_i}p(\mathbf{z})d\textbf{z} = \int I_{D_i}(\mathbf{z})p(\mathbf{z})d\textbf{z} = \E[I_{D_i}(\mathbf{z})].
		\label{Pi_estimator}
	\end{equation}
We can estimate $P_i$ via a standard MC simulation. 
Namely, we draw $N$ i.i.d. samples $\{\mathbf{z}_1,...,\mathbf{z}_N\}$ from the distribution $p(\mathbf{z})$,
and calculate the MC estimator of $P_i$ as 
\begin{equation}
		\hat{P}_i^{MC} = \frac{1}{N} \sum_{j=1}^{N} I_{D_i}(\mathbf{z}^j) = \frac{N_i}{N},\quad \mathrm{for}\,\, i=1,\,...,M,
		\label{MC_estimator}
	\end{equation}
where $N_i$ is the number of samples that fall in bin $B_i$.

Once we have obtained  $\{P_i\}_{i=1}^M$, the PDF of $y$ at the point $y_i\in B_i$ - for a sufficiently small $\Delta$ - can be calculated as $\pi(y_i) \approx P_i / \Delta$. The expected return and conditional value-at-risk are then obtained from the distribution $\pi(y_i)$.

Crucially, to just obtain the expected return, for a proposed weight \textbf{x}, one only needs a relatively small number of samples to obtain an accurate estimate through a Monte Carlo simulation. In contrast, to determine an accurate estimation of the CVaR, one must use a large number of samples from $\textbf{Z}$, to ensure that the tail distribution is properly assessed.

\subsection*{A.2 Portfolio Allocation Data Details}
We provide further detail of the portfolio allocation examples included within the main paper. Our portfolio allocation example is based on the twenty largest technology companies listed on American stock exchanges (both the NYSE \& Nasdaq) by market capitalisation, as is shown in Table~\ref{tab:FinData}. Data is taken from July 13, 2022.

We assume that the future stock price follows a normal distribution whose mean and standard deviation are stated in Table~\ref{tab:FinData}, based on historical data. We further assume that all stock prices are independent of each other. Further parameter values, such as the bid price and the strike price, are also shown in the table. The stock price data, including historic performance, is from \emph{morningstar.com}; call options price data is from \emph{marketwatch.com} and Greeks' data is from \emph{nasdaq.com}.

\begin{landscape}
\begin{table}
\resizebox{1.5\textwidth}{!}
{ 
\begin{tabular}{|l|l|l|l|l|l|l|l|l|l|}
\hline
\textbf{Asset $i$} &
\textbf{Company Name} &
  \textbf{Ticker} &
  \textbf{Stock Price (\$)} &
  \textbf{\begin{tabular}[c]{c} Historic Average \\ Annual Return (\%)\end{tabular}} &
  \textbf{\begin{tabular}[c]{c} Historic Return \\ Std Dev (\%)\end{tabular}} &
  \textbf{Strike Price (\$)} &
  \textbf{\begin{tabular}[c]{c}12-month Call \\ Bid Price (\$)\end{tabular}} &
  \textbf{Delta} &
  \textbf{Gamma} 
  \\ 
\hline
$1$ & Apple Inc             & AAPL  & $145.49$  & $34.67$  & $66.63$  & $160$   & $14.60$  & $0.4462$ & $0.0112$ \\ \hline
$2$ & Microsoft Corp        & MSFT  & $252.72$  & $31.83$  & $42.45$  & $275$   & $21.45$  & $0.4029$ & $0.0068$ \\ \hline
$3$ & Alphabet Inc          & GOOGL & $2227.07$ & $29.07$  & $40.46$  & $2,450$ & $222.20$ & $0.4249$ & $0.0007$ \\ \hline
$4$ & Amazon.com Inc        & AMZN  & $110.40$  & $75.21$  & $196.12$ & $120$   & $15.10$  & $0.4615$ & $0.0119$ \\ \hline
$5$ & Tesla Inc             & TSLA  & $711.12$  & $116.93$ & $219.27$ & $780$   & $152.90$ & $0.5313$ & $0.0012$ \\ \hline
$6$ & Meta Platforms Inc    & META  & $163.49$  & $36.96$  & $33.72$  & $180$   & $26.05$  & $0.5160$ & $0.0066$ \\ \hline
$7$ & Nvidia Corp           & NVDA  & $151.64$  & $59.05$  & $89.51$  & $165$   & $22.35$  & $0.4942$ & $0.0069$ \\ \hline
$8$ & Broadcom Inc          & AVGO  & $481.73$  & $37.44$  & $26.24$  & $530$   & $37.70$  & $0.3933$ & $0.0032$ \\ \hline
$9$ & Oracle Corp           & ORCL  & $70.03$   & $36.06$  & $66.03$  & $78$    & $4.75$   & $0.3857$ & $0.0249$ \\ \hline
$10$ & Cisco Systems Inc     & CSCO  & $42.70$   & $33.10$  & $59.29$  & $47$    & $2.06$   & $0.3896$ & $0.0414$ \\ \hline
$11$ & Adobe Inc             & ADBE  & $371.94$  & $33.43$  & $52.08$  & $410$   & $39.25$  & $0.4569$ & $0.0038$ \\ \hline
$12$ & Salesforce Inc        & CRM   & $163.49$  & $34.66$  & $42.79$  & $180$   & $17.80$  & $0.4859$ & $0.0081$ \\ \hline
$13$ & Intel Corp            & INTC  & $37.21$   & $25.58$  & $51.98$  & $40$    & $3.40$   & $0.4041$ & $0.0393$ \\ \hline
$14$ & Qualcomm Inc          & QCOM  & $135.64$  & $100.34$ & $469.18$ & $150$   & $15.15$  & $0.4764$ & $0.0091$ \\ \hline
$15$ & Texas Instruments Inc & TXN   & $154.29$  & $16.71$  & $36.39$  & $170$   & $10.65$  & $0.4131$ & $0.0110$ \\ \hline
$16$ & Intuit Inc            & INTU  & $383.31$  & $27.55$  & $42.81$  & $420$   & $44.20$  & $0.4769$ & $0.0035$ \\ \hline
$17$ & AMD Inc               & AMD   & $77.52$   & $49.76$  & $120.59$ & $85$    & $13.10$  & $0.5152$ & $0.0130$ \\ \hline
$18$ & IBM Corp              & IBM   & $137.18$  & $8.84$   & $26.60$  & $150$   & $6.60$   & $0.3191$ & $0.0133$ \\ \hline
$19$ & Paypal Holdings Inc   & PYPL  & $71.36$   & $39.24$  & $47.09$  & $78$    & $11.80$  & $0.5119$ & $0.0144$ \\ \hline
$20$ & Netflix Inc           & NFLX  & $176.56$  & $72.39$  & $115.07$ & $195$   & $29.85$  & $0.5222$ & $0.0053$ \\ \hline
\end{tabular}
}
\caption{Key Financial Data as of $13$th July $2022$}
\label{tab:FinData}
\end{table}
\end{landscape}

\end{document}